\newcommand{\tr}{\mbox{Tr}}
\newcommand{\arc}{\mbox{Arc}}
\newcommand\bigfrown[2][\textstyle]{\ensuremath{%
  \array[b]{c}\text{\resizebox{3ex}{.7ex}{$#1\frown$}}\\[-0.9ex]#1#2\endarray}}
\date{\today}
\begin{document}
\title{Quantum contextuality of a qutrit state}

\author{Zhen-Peng Xu}
 \affiliation{Theoretical Physics Division, Chern Institute of Mathematics, Nankai University,
 Tianjin 300071, People's Republic of China}

\author{Hong-Yi Su}
\email{hysu@mail.nankai.edu.cn}
 \affiliation{Theoretical Physics Division, Chern Institute of Mathematics, Nankai University,
 Tianjin 300071, People's Republic of China}

\author{Jing-Ling Chen}
\email{chenjl@nankai.edu.cn}
 \affiliation{Theoretical Physics Division, Chern Institute of Mathematics, Nankai University,
 Tianjin 300071, People's Republic of China}
 \affiliation{Centre for Quantum Technologies, National University of Singapore,
 3 Science Drive 2, Singapore 117543}

\begin{abstract}
We present a study of quantum contextuality of three-dimensional mixed states for the Klyachko-Can-Binicio\u{g}lu-Shumovsky (KCBS) and the Kurzy\'{n}ski-Kaszlikowski (KK) noncontextuality inequalities. For any class of states whose eigenvalues are arranged in decreasing order, a universal set of measurements always exists for the KK inequality, whereas none does for the KCBS inequality. This difference can be reflected from the spectral distribution of the overall measurement matrix. Our results would facilitate the error analysis for experimental setups, and our spectral method in the paper combined with graph theory could be useful in future studies on quantum contextuality.
\end{abstract}

 \pacs{03.65.Ud,
03.67.Mn,
42.50.Xa}

\maketitle

\section{Introduction}
Quantum contextuality, which was independently discovered by Kochen and Specker (KS) \cite{KS}, and Bell \cite{bell66}, is a fundamental concept in quantum information theory. It can be revealed by the KS sets in a logic manner, or by violation of statistical noncontextuality inequalities. So far, many theoretical and experimental works have been accomplished in order to find the optimal noncontextuality inequalities \cite{kcbs,Yu-Oh,Cabello5,kk9ray} or KS sets \cite{Peres1991,ck1993,kern1994,Cabello1,kp1995}, which in turn contribute to the use of speeding up quantum algorithms \cite{speedup}.

New theoretical tools \cite{CSW14,cohom} have been invented to study contextuality. Graph theory is a such representative that finds its wide applications and effectiveness in discussing the contextual behavior~\cite{CSW14}. Two different rank-$1$ projective measurements $P_1$ and $P_2$ are said to be exclusive if they commute with one another. The exclusivity relation of a set of rank-$1$ measurements $P_i$'s for a noncontextuality inequality $\sum_i \langle P_i \rangle \le \alpha$ can be effectively represented in an exclusivity graph $G$ consisting of vertices and edges, where a pair of vertices $i,j$ are connected if and only if the corresponding events of probability $P_i, P_j$ are mutually exclusive. For each exclusivity graph, the classical bound $\alpha$ of the noncontextuality inequality equals to the independence $\alpha(G)$, and the maximal quantum prediction is just the Lov\'{a}sz number $\vartheta(G)$ \cite{Lovasz79}.

Nevertheless, there is no perfectly ``pure" state in actual experiment. It is quite necessary to consider mixed states and analyze their influence upon contextuality. Although there are state-independent noncontextuality (SIC) inequalities, whose quantum violation is independent of which state is to be measured, yet in general the violation of a noncontextuality inequality may depend on the mixedness of the state. There have been proposed various measures of the mixedness of a state, linear entropy \cite{lineare} among them is an efficient one easy to compute: For a $d$-dimensional mixed state $\rho$, the linear entropy is defined as $S_l(\rho) = \frac{d}{d-1}(1-\tr[\rho^2])$.

The KCBS inequality \cite{kcbs} is the simplest noncontextuality inequality, in the sense that it requires the minimal number of projective measurements, while the KK inequality \cite{kk9ray} is a first one that can be violated by almost all states except the maximally mixed state. In this paper, we focus on these two inequalities. In Section \ref{MCMS}, the maximal contextuality of mixed states (MCMS) for a fixed linear entropy is presented for each inequality.  Section \ref{spectral} then aims to give a spectral analysis on the overall measurement matrix, endeavoring to explore the question of how the existence of a universal set of measurements depends on the spectral distribution. At last, we give some conclusion and discussions.

\section{The maximally contextual mixed states for the KCBS and KK noncontextuality inequalities}\label{MCMS}

To start with, the KCBS and the KK inequalities are two of the simple and well-known noncontextuality inequalities for three-dimensional systems, usually written as respectively
\[
I_{KCBS} = \sum_{i=1}^5 \langle P_i \rangle \stackrel{\mbox{\tiny{NCHV}}}{\leq} 2 \stackrel{\mbox{\tiny{QM}}}{\leq} \sqrt{5},
\]
and
\[
I_{KK} = \sum_{i=1}^9 \langle P_i \rangle \stackrel{\mbox{\tiny{NCHV}}}{\leq} 3 \stackrel{\mbox{\tiny{QM}}}{\leq} \frac{10}{3},
\]
where $\langle P_i\rangle\equiv\tr[\rho P_i]$, $\rho$ is the general mixed state, and $P_i$'s are rank-$1$ projective measurements with exclusivity relations shown in Figs. \ref{figkcbs} and \ref{figkk}. Note that the former is the simplest inequality that requires the minimal number of measurements, while the latter is a first one that is quantum mechanically violated by all but the maximally mixed state.

\begin{figure}
  \subfigure[]{
    \label{figkcbs}
    \begin{minipage}[b]{0.24\textwidth}
      \centering
 \begin{tikzpicture}[scale=0.45]
 \tikzstyle{every node}=[circle,fill=blue!25,inner sep=0pt,minimum size=0.4cm]
    \foreach \y[count=\a] in {1,3,5,2,4}
      {\pgfmathtruncatemacro{\kn}{72*\y+18}
       \node at (\kn:3) (b\y) {\small \a};}
\draw[] (b1)--(b2)--(b3)--(b4)--(b5)--(b1);
\end{tikzpicture}
    \end{minipage}}%
  \subfigure[]{
    \label{figkk}
    \begin{minipage}[b]{0.24\textwidth}
      \centering
 \begin{tikzpicture}[scale=0.45]
 \tikzstyle{every node}=[circle,fill=blue!25,inner sep=0pt,minimum size=0.4cm]
    \foreach \y[count=\a] in {9,2,1,4,7}
      {\pgfmathtruncatemacro{\kn}{72*\a+18}
       \node at (\kn:3) (b\y) {\small \y};}
    \foreach \y[count=\a] in {8,6,3}
      {\pgfmathtruncatemacro{\kn}{72*\a-54}
       \node at (\kn:4) (b\y) {\small \y};}
\node at (90:2) (b5) {\small 5};
\draw[] (b5)--(b2)--(b1)--(b4)--(b7)--(b5)--(b9)--(b6);
\draw[] (b4)--(b8)--(b6)--(b3)--(b1);
\draw[] (b2)--(b3);
\draw[] (b7)--(b8);
\end{tikzpicture}
    \end{minipage}}
\caption{The exclusivity graph of KCBS (a) and KK (b).}
\end{figure}
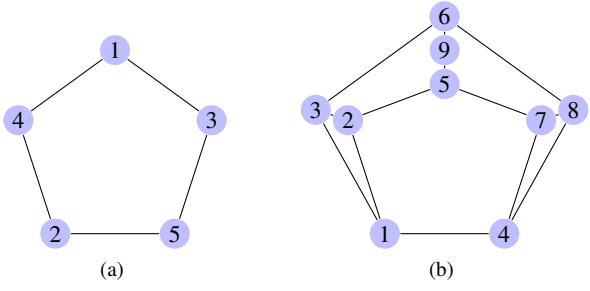

Without loss of generality and for the sake of convenience, we consider a quantum state in the diagonal form:
\[
\rho =
\begin{bmatrix}
\lambda_1 &  & \\
 & \lambda_2 & \\
 & & \lambda_3
\end{bmatrix},
\]
with $\lambda_i$ in decreasing order, and $\vec{\lambda} \equiv (\lambda_1, \lambda_2, \lambda_3)$. This makes sense since for a fixed general state and its optimal measurement set, one can always be able to diagonalize the state and change accordingly its overall measurement set by a global rotation.

In what follows we shall investigate the maximally contextuality $C_q$ of mixed states with respect to a fixed linear entropy $S_l$, for each inequality mentioned above.
 
We plot the upper and lower bounds of contextuality of mixed states in Fig.~\ref{nfig}, together with measuring directions specified in Table \ref{table1} \ref{table2}. In particular, the specific states and the analytic expressions for each curve (except $\bigfrown{AC}$, whose expression is missing) are listed as follows (for convinient, denote the linear entropy $S_l$ as $s$):
\begin{eqnarray}
\bigfrown{AC}:\;\;  \vec{\lambda} &=& (\frac{1+\sqrt{1-{4s}/{3}}}{2}, \frac{1-\sqrt{1-{4s}/{3}}}{2}, 0),\\
\bigfrown{CD}:\;\; \vec{\lambda} &=& (\frac{1+\sqrt{1-s}}{3}, \frac{1+\sqrt{1-s}}{3}, \frac{1-2\sqrt{1-s}}{3}),\nonumber\\
C_q &=& \frac{2\sqrt{1-s}+5}{3},\\
\bigfrown{AD}:\;\; \vec{\lambda} &=& (\frac{1+2\sqrt{1-s}}{3}, \frac{1-\sqrt{1-s}}{3}, \frac{1-\sqrt{1-s}}{3}),\nonumber\\
C_q &=&  \frac{(3 \sqrt{5}-5) \sqrt{1-s}+5}{3},\\
\bigfrown{EF}:\;\;\vec{\lambda} &=& \frac{(3+\sqrt{9-12 s}, 3-\sqrt{9-12 s}, 0)}{6},\nonumber\\
C_q &=& \frac{\sqrt{9-12 s}+57}{18},\\
\bigfrown{FG}:\;\;\vec{\lambda} &=& \frac{(1+\sqrt{4-3 s}, 1, 1-\sqrt{4-3 s})}{3},\nonumber\\
C_q &=& \frac{2 \sqrt{1-s}}{3 \sqrt{3}}+3,\\
\bigfrown{EG}:\;\;\vec{\lambda} &=& \frac{(1+2\sqrt{1-s}, 1-\sqrt{1-s}, 1-\sqrt{1-s})}{3},\nonumber\\
C_q &=& \frac{\sqrt{1-s}+9}{3}.
\end{eqnarray}

\begin{figure}
  \subfigure[]{
    \label{nkcbs} 
    \begin{minipage}[b]{0.25\textwidth}
      \centering
      \includegraphics[width=1.6in]{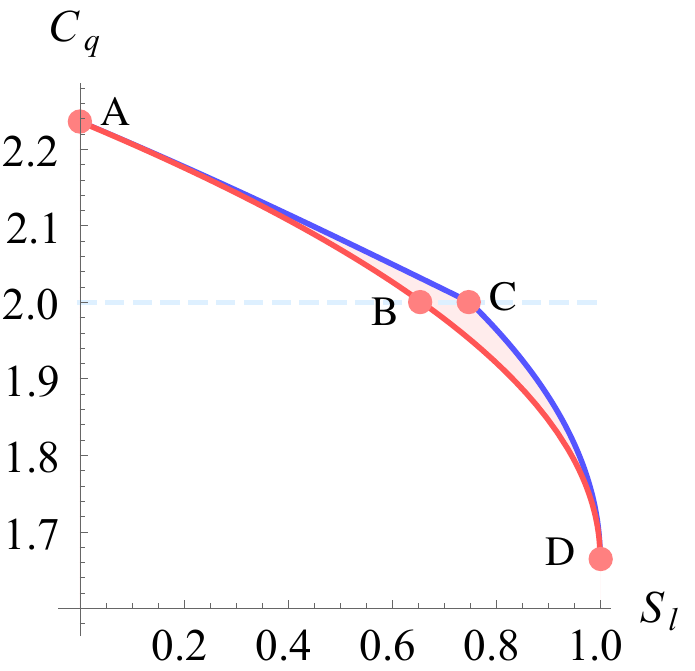}
    \end{minipage}}%
  \subfigure[]{
    \label{nkk} 
    \begin{minipage}[b]{0.25\textwidth}
      \centering
      \includegraphics[width=1.6in]{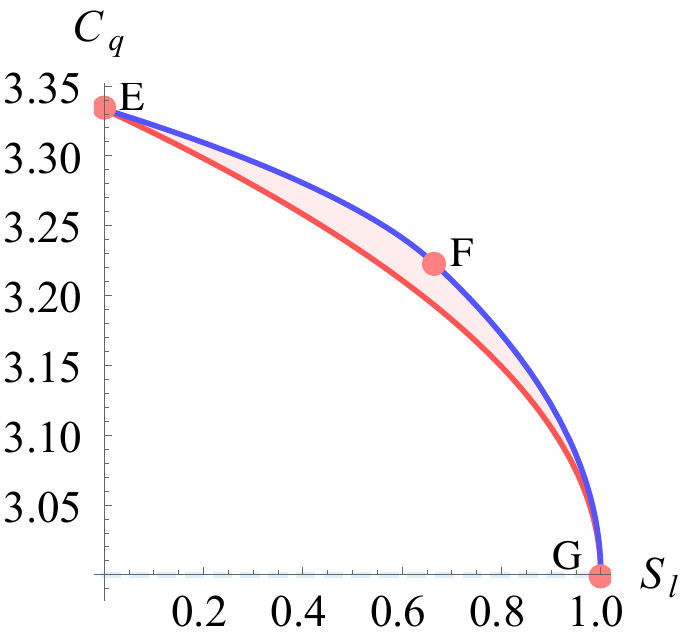}
    \end{minipage}}
  \caption{\scriptsize (Color online) The upper (blue) and lower (red) bounds of contextuality of mixed states for KCBS (a) and KK (b), where $A~(0,\sqrt{5}),~B~(\frac{3}{40} (11-\sqrt{5}),2),~C~(\frac{3}{4},2)$, $D~(1, \frac{5}{3})$, $E~(0,10/3),~F~(2/3,29/9),~G~(1, 3)$.}
  \label{nfig} 
\end{figure}

\begin{table}
\begin{tabular}{llll}
\hline\hline
$|i\rangle$ & $i_1$ & $i_2$ & $i_3$ \\ \hline
$|1\rangle$ & $1$   & $0$   & $0$   \\
$|2\rangle$ & $1$   & $0$   & $0$   \\
$|3\rangle$ & $0$   & $1$   & $0$   \\
$|4\rangle$ & $0$   & $1$   & $0$   \\
$|5\rangle$ & $0$   & $0$   & $1$   \\
\hline\hline
\multicolumn{4}{c}{(a)}\\
\end{tabular}
\hspace{0.25cm}
\begin{tabular}{llll}
\hline\hline
$\tau|i\rangle$ & $i_1$                & $i_2$          & $i_3$          \\ \hline
$\tau|1\rangle$ & $\sqrt{\cos(\beta)}$ & $1$            & $0$            \\
$\tau|2\rangle$ & $\sqrt{\cos(\beta)}$ & $\cos(2\beta)$ & $\sin(2\beta)$ \\
$\tau|3\rangle$ & $\sqrt{\cos(\beta)}$ & $\cos(4\beta)$ & $\sin(4\beta)$ \\
$\tau|4\rangle$ & $\sqrt{\cos(\beta)}$ & $\cos(6\beta)$ & $\sin(6\beta)$ \\
$\tau|5\rangle$ & $\sqrt{\cos(\beta)}$ & $\cos(8\beta)$ & $\sin(8\beta)$ \\
\hline\hline
\multicolumn{4}{c}{(b)}\\
\end{tabular}
\caption{The set of measurements for KCBS, (a) for \arc(C,D) while (b) for \arc(A,C), where $\tau = \frac{1}{\sqrt{1+\cos(\beta)}}$, $\beta = \pi/5$.}
\label{table1}
\end{table}

\begin{table}[h!]
\centering
\begin{tabular}{llll}
\hline\hline
$|i\rangle$ & $i_1$        & $i_2$               & $i_3$                \\ \hline
$|1\rangle$ & $1$          & $0$                 & $0$                  \\
$|2\rangle$ & $0$          & $1$                 & $0$                  \\
$|3\rangle$ & $0$          & $0$                 & $1$                  \\
$|4\rangle$ & $0$          & $1/\sqrt{2}$        & $-1/\sqrt{2}$        \\
$|5\rangle$ & $1/\sqrt{3}$ & $0$                 & $-\sqrt{2}/\sqrt{3}$ \\
$|6\rangle$ & $1/\sqrt{3}$ & $\sqrt{2}/\sqrt{3}$ & $0$                  \\
$|7\rangle$ & $1/\sqrt{2}$ & $1/2$               & $1/2$                \\
$|8\rangle$ & $1/\sqrt{2}$ & $-1/2$              & $-1/2$               \\
$|9\rangle$ & $1/\sqrt{2}$ & $-1/2$              & $1/2$                \\
\hline\hline
\end{tabular}
\caption{The set of measurements for KK.}
\label{table2}
\end{table}

\section{A spectral analysis}\label{spectral}

With the above results, we note that there exist a universal set of measuring directions for the KK inequality; however, this is not true for the KCBS inequality. To proceed, we shall do a spectral analysis on the measuring sets relative to these inequalities.

Define 
\begin{eqnarray}
M= \sum_i P_i,
\end{eqnarray}
which is precisely the expression of inequalities before taking average with a specific state. The quantum prediction can be denoted as $\langle M \rangle = \tr[M\rho]$.

First, we introduce a lemma on Hermitian matrices:
\begin{lemma}\label{lemma}
Assume that 
$$
A = \begin{bmatrix}
a_1 & & & \\
& a_2 & & \\
& & \ddots &\\
& & & a_n\\
\end{bmatrix}, 
B  = U \begin{bmatrix}
b_1 & & & \\
& b_2 & & \\
& & \ddots &\\
& & & b_n\\
\end{bmatrix} U^\dagger, 
$$
where $a_1 \ge a_2 \ge \cdots \ge a_n$, $b_1 \ge b_2 \ge \cdots \ge b_n$, $U = (u_{ij})$ is a unitary matrix. Then 
$$
\tr[AB] \le \vec{a}.\vec{b},
$$
where $\vec{a} = (a_1, a_2, \cdots, a_n),~\vec{b} = (b_1, b_2, \cdots, b_n)$.
\end{lemma}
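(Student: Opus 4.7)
The plan is to reduce the inequality to a statement about a doubly stochastic matrix and then invoke the rearrangement inequality via Birkhoff's theorem. First, I would write $B = UDU^\dagger$ with $D=\mathrm{diag}(b_1,\dots,b_n)$ and simply expand the trace in the standard basis. Since $A$ is already diagonal, one obtains
\[
\tr[AB] \;=\; \sum_{i,j} a_i\, b_j\, |u_{ij}|^2,
\]
so the only information about $U$ that enters is the matrix of squared moduli $S=(s_{ij})$ with $s_{ij}=|u_{ij}|^2$. Unitarity of $U$ immediately implies $\sum_i s_{ij} = \sum_j s_{ij} = 1$, i.e.\ $S$ is doubly stochastic.

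Next I would rewrite the trace as a bilinear form $\tr[AB] = \vec{a}^{\,T} S\, \vec{b}$ and invoke Birkhoff's theorem, which states that every doubly stochastic matrix lies in the convex hull of permutation matrices: $S = \sum_\pi c_\pi P_\pi$ with $c_\pi\ge 0$, $\sum_\pi c_\pi = 1$. Then
\[
\tr[AB] \;=\; \sum_\pi c_\pi\, \vec{a}^{\,T} P_\pi\, \vec{b} \;=\; \sum_\pi c_\pi \sum_i a_i\, b_{\pi(i)}.
\]
Because both $\vec{a}$ and $\vec{b}$ are sorted in decreasing order, the rearrangement inequality gives $\sum_i a_i b_{\pi(i)} \le \sum_i a_i b_i = \vec{a}\cdot\vec{b}$ for every permutation $\pi$. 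Taking the convex combination preserves the bound, which yields $\tr[AB]\le \vec{a}\cdot\vec{b}$ as claimed, with equality when $U$ can be taken to be the identity (so $S=I$).

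The hard part is really just knowing (or citing) the two classical ingredients: Birkhoff's theorem on doubly stochastic matrices and the rearrangement inequality. If one wishes to avoid Birkhoff explicitly, an equivalent route is to note that the map $\vec{a}\mapsto S\vec{a}$ takes $\vec{a}$ to a vector majorized by $\vec{a}$ (by the Hardy--Littlewood--P\'olya theorem), and that $\langle \vec{a},\vec{b}\rangle$ is Schur-convex in its arguments when both are sorted; either packaging gives the same conclusion. I would also remark in passing that this is the Hermitian specialization of von Neumann's trace inequality, so the result is standard; the lemma is included here only because it will be applied repeatedly to the overall measurement matrix $M=\sum_i P_i$ in the spectral analysis that follows.
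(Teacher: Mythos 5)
Your proposal is correct and follows essentially the same route as the paper: expand $\tr[AB]=\vec a^{\,T}S\vec b$ with $S=(|u_{ij}|^2)$ doubly stochastic, apply the Birkhoff--von Neumann theorem to reduce to permutation matrices, and conclude from the ordering of $\vec a,\vec b$. Your version is slightly more explicit in invoking the rearrangement inequality to justify that the identity permutation is optimal, a step the paper leaves implicit, but the argument is the same.
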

\begin{proof}
Directly computation shows that
$$
\tr[AB] =  \vec{a} W \vec{b},
$$
where $W = (w_{ij})$ with $w_{ij} = |u_{ij}|^2$. 

Then $W$ is a doubly stochastic matrix by the definition of doubly stochastic matrices.  The Birkhoff-von Neumann theorem says that the set of $n \times n$ doubly stochastic matrices forms a convex polytope whose vertices are the $n \times n$ permutation matrices.  If we consider the linear functional $f(W) = \vec{a} W \vec{b}$ on that  convex polytope, then its optimal can be achieved at the vertices, i.e., the permutation matrices. Since $\vec{a}, \vec{b}$ are already in decreasing order, the maximal of $f(W) = \vec{a}.\vec{b}$ can be achieved when $W = I_n$, which implies $U = I_n$.
\end{proof}

Given a state 
$$
\rho = \begin{bmatrix}
\lambda_1 & &\\
& \lambda_2 &\\
& & \lambda_3\\
\end{bmatrix},
$$
where, $\lambda_1 \ge \lambda_2 \ge \lambda_3$. And the set of measurements $\{|i\rangle\}$ is optimal for $\rho$. Assume $U$ is such a unitary matrix that $U^\dagger \left(\sum_i |i\rangle \langle i| \right) U$ is diagonal and the diagonal is in decreasing order, Lemma \ref{lemma} tells us that the set of measurements $\{ \tilde{|i\rangle} = U^\dagger |i\rangle\}$ is also optimal while $M = \sum_i \tilde{|i\rangle} \tilde{\langle i|}$ is diagonal. Thus, for our purpose, we can only consider the diagonal $M$:
\[
M =
\begin{bmatrix}
m_1 & & \\
 & m_2 & \\
 & & m_3
\end{bmatrix},
\]
with $m_i$ in decreasing order, $m_1 + m_2 + m_3 = n$, and $n$ being the total number of settings (i.e., 5 for KCBS and 9 for KK).

Denote $\vec{m} = (m_1, m_2, m_3)$. The condition $m_1 + m_2 + m_3 = n$ restrains $\vec{m}$ within an, e.g., $m_1m_2$-plane. In general, $C_q = \vec{\lambda}\cdot\vec{m} \le n \lambda_1$. In particular, $C_q = n \lambda_1$, which holds only to an edgeless exclusivity graph $G$, that is, there is no exclusive relation between any pair of events of probability. In fact, the exclusivity relation will further limit the distribution of $\vec{m}$. As we shall see, the exclusivity relations for the KCBS and KK inequalities are so strong that $\vec{m}$ will be dramatically restrained to a curve, rather than a region in the plane.

The curves of $(m_1, m_2)$ for KCBS and KK inequalities are plotted in Fig. \ref{ekcbs} and Fig. \ref{ekk}, respectively. (For comparison, see Fig. \ref{skcbs} for the quantum violation of the KCBS inequality by a convex mixture of $|1\rangle$, $|3\rangle$ and $|5\rangle$ in Table \ref{table1}.) Obviously, $m_2$ must be a function of $m_1$. Then
\begin{eqnarray*}
C_q(m_1)&=& \vec{\lambda}\cdot\vec{m}\nonumber\\
&=& n \lambda_3 + (\lambda_1 - \lambda_3) m_1 + (\lambda_2 - \lambda_3) m_2(m_1).
\end{eqnarray*}
By differentiating the expression with respect to $m_1$,
\[
\frac{d C_q(m_1)}{d m_1} = (\lambda_1 - \lambda_3) + (\lambda_2 - \lambda_3) \frac{dm_2(m_1)}{d m_1},
\]
we will obtain the optimal $M$ for a given state $\rho$.

\begin{figure}
  \subfigure[]{
    \label{ekcbs} 
    \begin{minipage}[b]{0.25\textwidth}
      \centering
      \includegraphics[width=1.6in]{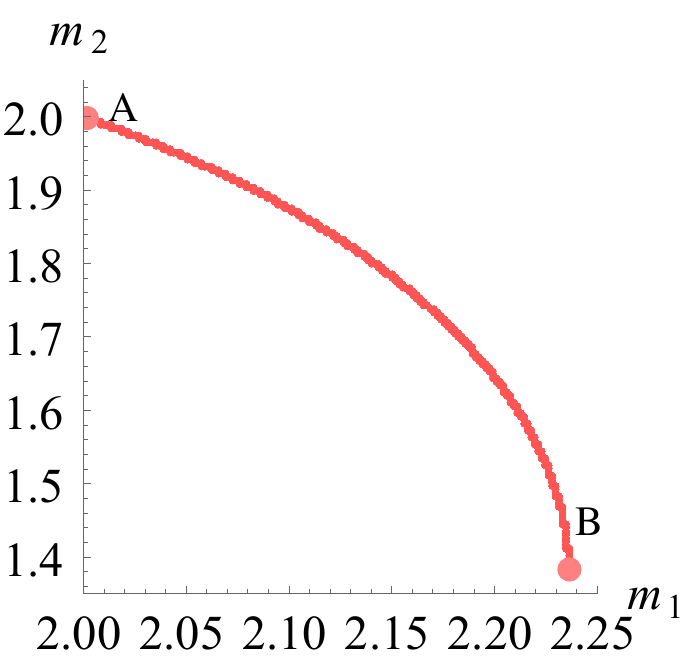}
    \end{minipage}}%
  \subfigure[]{
    \label{ekk} 
    \begin{minipage}[b]{0.25\textwidth}
      \centering
      \includegraphics[width=1.6in]{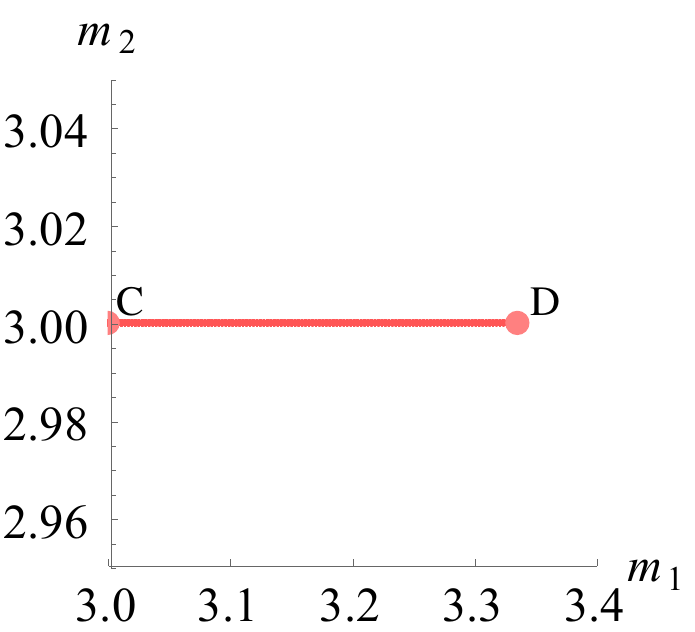}
    \end{minipage}}
  \caption{\scriptsize The curve of $(m_1,m_2)$ for KCBS (a) and KK (b), where $A (2,2)$, $B(\sqrt{5},\frac{5-\sqrt{5}}{2})$, $C (3,3)$, $D (\frac{10}{3},3)$.}
  \label{efig} 
\end{figure}

\begin{figure}[h]
\centering
\includegraphics[width=0.4\textwidth]{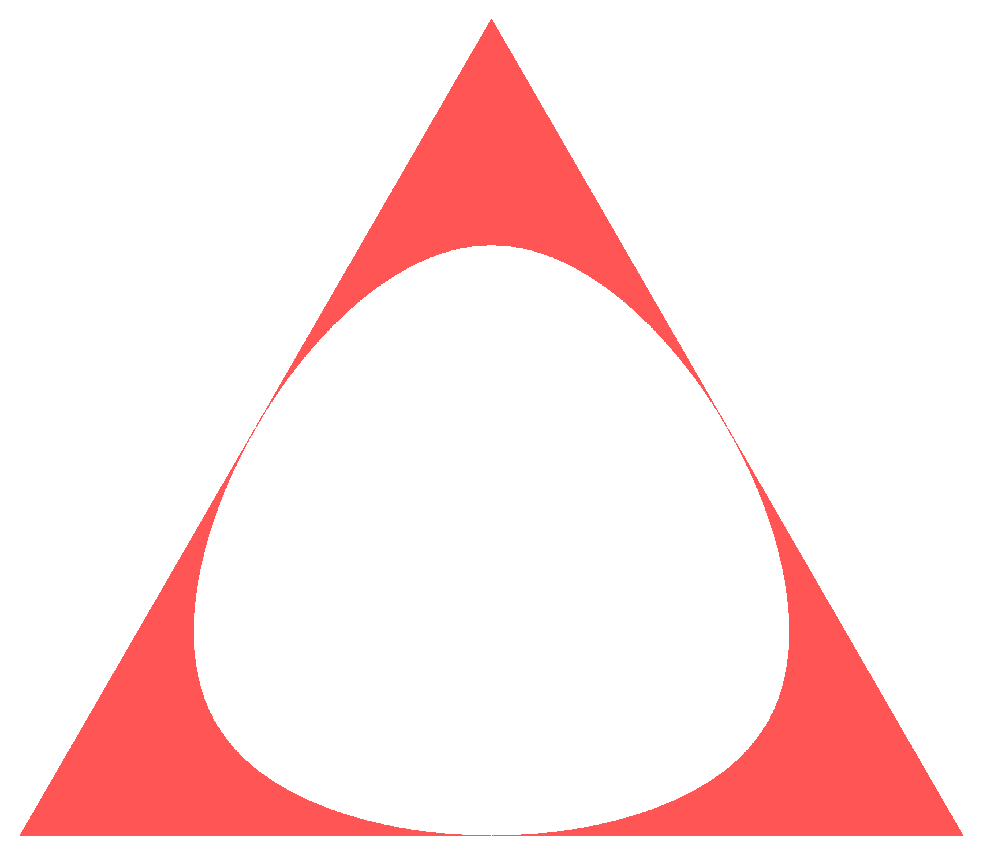}
\caption{\scriptsize The violation of diagonal states for KCBS.}
\label{skcbs}
\end{figure}

For the KCBS inequality, we have $dm_2/dm_1 < -1$, and

\noindent(i):
\[
\bigfrown{AC}:\;\;\frac{d C_q}{d m_1} = \frac{1+\sqrt{1-\frac{4s}{3}}}{2} + \frac{1-\sqrt{1-\frac{4s}{3}}}{2}\frac{dm_2}{d m_1}.
\]
Then $\frac{d C_q}{d m_1} = 0$ implies that
\[
\frac{dm_2}{d m_1} = -\frac{1+\sqrt{1-\frac{4s}{3}}}{1-\sqrt{1-\frac{4s}{3}}}.
\]
Hence, $m_1$ varies with $s$, meaning that there is no universal set of measurements for this curve.

\noindent(ii):
\[
\bigfrown{CD}:\;\;\frac{d C_q}{d m_1} = \sqrt{1-s}(1+\frac{dm_2}{d m_1}) < 0.
\]
This shows that the set of measurements in Table I is optimal.

\noindent(iii):
\[
\bigfrown{AD}:\;\;\frac{d C_q}{d m_1} = \sqrt{1-s} > 0.
\]
Again, this shows that the set of measurements in Table II is optimal.

For the KK inequality, we find that $dm_2/dm_1 = 0$ (i.e., $dc_q/dm_1 \ge 0$) always holds. So all the states that violate the inequality possess the same set of measurements shown in Table III.

Moreover, the condition $m_1 + m_2 + m_3 = n$ yields 
$$\frac{dm_3}{d m_1} = -(1+\frac{dm_2}{d m_1}).$$
For KCBS, this yields $\frac{dm_3}{d m_1} > 0$, a monotonically increasing relation between $m_1$ and $m_3$, implying that they take their maxima simultaneously, while for KK this yields $\frac{dm_3}{d m_1} < 0$, a monotonically decreasing relation, implying that the maximal $m_3$ is obtained when $m_1$ reaches its minimum, and vice versa.

Consequently, the spectral distribution of a noncontextualiy inequality can reflect the nature as to whether there exist a universal set of measurements, so that possible experimental setups could be greatly facilitated.


\section{Conclusion and discussions}\label{conclusion}
In this paper, we have investigated the quantum contextuality of mixed states for the KCBS and the KK noncontextuality inequalities, and explored the question of why there exists a universal set of measurements for the latter, whereas none does for the former inequality. 
We have shown that a spectral analysis on the set of measurements may provide insightful clues toward the ultimate answer to this question.
We believe that further works on combining graph theory and spectral theory in studying quantum contextuality may shed new light on these problems.

\begin{acknowledgments}
J.L.C. is supported by the National Basic Research Program (973 Program) of China under Grant No.\ 2012CB921900 and the NSF of China (Grant Nos.\ 11175089 and 11475089). This
work is also partly supported by the National Research Foundation and the Ministry of Education, Singapore.
\end{acknowledgments}

\end{document}